\begin{document}
%
% paper title
% can use linebreaks \\ within to get better formatting as desired
\title{Utility-Energy Efficiency Oriented User Association with Power Control in Heterogeneous Networks}
\author{
\IEEEauthorblockN{Xietian~Huang, Wei~Xu, Hong~Shen,  Hua~Zhang, and Xiaohu~You}\\
%\IEEEauthorblockA{National Mobile Communications Research Laboratory, Southeast University, Nanjing 210096, China}\\
%Email: \{xthuang, wxu, shhseu, huazhang, xhyu\}@seu.edu.cn
\vspace{-0.8cm}
\thanks{
Manuscript received July 30, 2017; revised November 17, 2017 and December 29, 2017; accepted January 05, 2018. This work of was supported in part by the Hong Kong, Macao and Taiwan Science \& Technology Cooperation Program of China (2016YFE0123100), NSFC (grant nos. 61471114, 61501110, 61571118), the Six Talent Peaks project in Jiangsu Province under GDZB-005, the Open Research Fund of the State Key Lab of ISN under ISN18-03, and the NSF of Jiangsu Province under BK20150635. The editor coordinating the review of this paper and approving it for publication was M. Assaad.
\textit{(Corresponding authors: Wei Xu, Hong Shen.)}

All authors are with the National Mobile Communications Research Laboratory (NCRL), Southeast University, Nanjing 210096, China. X. Huang and W. Xu are also with the State Key Laboratory of Integrated Services Networks, Xidian University, Xi'an 710071, China (e-mail: xthuang@seu.edu.cn; wxu@seu.edu.cn).

}

}

% author names and affiliations
% use a multiple column layout for up to three different
% affiliations

% conference papers do not typically use \thanks and this command
% is locked out in conference mode. If really needed, such as for
% the acknowledgment of grants, issue a \IEEEoverridecommandlockouts
% after \documentclass

% for over three affiliations, or if they all won't fit within the width
% of the page, use this alternative format:
%
%\author{\IEEEauthorblockN{Michael Shell\IEEEauthorrefmark{1},
%Homer Simpson\IEEEauthorrefmark{2},
%James Kirk\IEEEauthorrefmark{3},
%Montgomery Scott\IEEEauthorrefmark{3} and
%Eldon Tyrell\IEEEauthorrefmark{4}}
%\IEEEauthorblockA{\IEEEauthorrefmark{1}School of Electrical and Computer Engineering\\
%Georgia Institute of Technology,
%Atlanta, Georgia 30332--0250\\ Email: see http://www.michaelshell.org/contact.html}
%\IEEEauthorblockA{\IEEEauthorrefmark{2}Twentieth Century Fox, Springfield, USA\\
%Email: homer@thesimpsons.com}
%\IEEEauthorblockA{\IEEEauthorrefmark{3}Starfleet Academy, San Francisco, California 96678-2391\\
%Telephone: (800) 555--1212, Fax: (888) 555--1212}
%\IEEEauthorblockA{\IEEEauthorrefmark{4}Tyrell Inc., 123 Replicant Street, Los Angeles, California 90210--4321}}

% use for special paper notices
%\IEEEspecialpapernotice{(Invited Paper)}

% make the title area
\maketitle

\newtheorem{mylemma}{Lemma}
\newtheorem{mytheorem}{Theorem}
\newtheorem{mypro}{Proposition}
\begin{abstract}
%\boldmath
This letter investigates optimizing utility-energy efficiency (UEE), defined as the achieved network utility when consuming a unit of power, rather than a typical energy efficiency metric, in a heterogeneous network (HetNet). To tackle the nonconvexity of the problem due to integer constraints and coupled variables, we devise an alternating optimization algorithm. It applies Lagrangian dual analysis with auxiliary variables, which successfully transforms each subproblem to a convex one with efficient solutions. The proposed algorithm is consequently guaranteed to converge to a local optimum with noticeable performance gain via simulation verifications.
\end{abstract}
% IEEEtran.cls defaults to using nonbold math in the Abstract.
% This preserves the distinction between vectors and scalars. However,
% if the conference you are submitting to favors bold math in the abstract,
% then you can use LaTeX's standard command \boldmath at the very start
% of the abstract to achieve this. Many IEEE journals/conferences frown on
% math in the abstract anyway.

% keywords
\begin{IEEEkeywords}
Utility-energy efficiency (UEE), user association, power control, heterogeneous network (HetNet), load balancing.
\vspace{-0.5cm}
\end{IEEEkeywords}

% For peer review papers, you can put extra information on the cover
% page as needed:
% \ifCLASSOPTIONpeerreview
% \begin{center} \bfseries EDICS Category: 3-BBND \end{center}
% \fi
%
% For peerreview papers, this IEEEtran command inserts a page break and
% creates the second title. It will be ignored for other modes.
\IEEEpeerreviewmaketitle
\section{Introduction}
Heterogeneous network (HetNet) has recently become a research focus as an effective technology to improve spectrum efficiency (SE) \cite{Damnianovic2011A}. Different from traditional network, HetNet equips low-power small-cell base stations (SBSs) besides macro-cell base station (MBS), which makes the deployment more flexible while also comes with several challenges.

User association is a significant issue that needs to be reconsidered. Since the transmit power of MBS is much higher than that of SBS, most users tend to be associated with MBS via classical association schemes, leading to unbalanced load. In this way, MBS users are unlikely to reach high rate because they have to share the resource of MBS \cite{Ye2012User}. While the whole network can benefit by transferring some MBS users to lightly-loaded SBSs. Transmit power control is a further resource allocation problem directly related to user association \cite{Shen2014Distributed}. A proper setting of power can decrease the interference between different tiers of BSs, which strongly influences the achievable rate. Early works like \cite{Hanly1995An} have revealed the benefits of joint optimization of user association and power control.

With the increasing energy costs of wireless networks, energy efficiency (EE) has become an important metric in 5G \cite{Cui2016Optimal}. Generally, EE design maximizes the achieved rate evaluated by per unit power consumption. In practice, however, one ultimate metric that matters is the network utility instead of the raw theoretical rate. Therefore, we propose a utility-energy efficiency (UEE) metric. By exploiting the popular log-utility model, we formulate the joint optimization of association and power control by UEE maximization. Since the original problem is nonconvex, we propose an iterative algorithm by solving the association and power control problem alternately. Numerical results show that the proposed algorithm outperforms existing methods in terms of various metrics including UEE.
%, {\color{blue}which belongs to the class of coordinate descent methods \cite{Bertsekas1999Nonlinear} in the field of numerical optimization. The nondecreasing and upper-bounded properties of the objective help to guarantee the convergence \cite{Xu2015Robust}}.

\section{System Model}\label{sec:system_model}
Consider a downlink HetNet consisting of $N_m$ MBSs and $N_s$ SBSs. Let $\mathbb{B}$ be the set of all BSs. $\mathbb{U}$ denotes the set of all users with size $N_u$. The received signal of user $i$ is
\vspace{-0.2cm}
\begin{equation}
y_{i}=\sum_{j\in \mathbb{B}}h_{ij}\sqrt{p_{j}}s_{j}+n_{i},\forall i\in \mathbb{U}, j\in \mathbb{B}
\vspace{-0.2cm}
\end{equation}
where $s_{j}$ is the transmit signal, $h_{ij}$ is the flat-fading channel gain, $n_{i}$ is the additive zero-mean Gaussian noise with variance $\sigma^{2}$, and $p_{j}$ denotes the transmit power. Assume that the BS has the global channel state information (CSI) for optimization and the channels vary slowly. Thus we can obtain the received signal-to-interference-plus-noise ratio (SINR) as
%\vspace{-0.2cm}
\begin{equation}\label{eq:SINR}
 \text{SINR}_{ij}=\frac{h_{ij}p_{j}}{\sum_{q\neq j}h_{iq}p_{q}+\sigma^{2} },\forall i\in \mathbb{U}, j\in \mathbb{B}.
\end{equation}

Denote binary variables $\{x_{ij}\}$ as the indicator of the association. If user $i$ is associated with BS $j$, then $x_{ij}\!=\!1$; otherwise $x_{ij}\!=\!0$. Although letting each user associate with more than one BS can avoid the combinatorial assignment issue \cite{Chien2016Joint}, it becomes difficult to implement multiple-BS association in practice \cite{Ye2012User}, when considering the synchronization and control signaling. Thus, we assume that each user can be associated with only one BS at a time. Denote $k_{j}$ as the number of users associated with BS $j$, i.e., $k_{j}\!=\!\sum_{i\in \mathbb{U}} x_{ij}$. The system bandwidth $W$ is reused by all BSs, and users associated with the same BS share the frequency resource. Our proposed algorithm considers a simple uniform resource allocation, thus the resource allocated to each user is $W/k_{j}$. The joint optimization of association and bandwidth allocation may achieve better performance at the expense of increased complexity, which is of great interest for further considerations.

According to the Shannon's formula, when user $i$ is associated with BS $j$, the achievable rate is characterized as
\begin{equation}\label{eq:rate}
 c_{ij}=(W/k_{j})\log(1+\text{SINR}_{ij}),\forall i\in \mathbb{U}, j\in \mathbb{B}.
\end{equation}

In downlink networks, EE is defined as the ratio of the sum rate to the total power consumption. However, maximizing system EE may result in extremely unfair throughput allocation. In practice, the network utility is more meaningful to subscribers. To preserve some degree of fairness, we consider optimizing the network UEE, i.e., maximizing the ratio of the sum utility rate to the power consumption. The logarithmic function is a typical utility function which is proven to achieve tradeoff between network throughput and fairness. Now we can formulate the UEE optimization problem by jointly finding the optimal association and power control strategies. It follows
\vspace{-0.2cm}
\begin{subequations}\label{eq:pro_X_P}
\begin{align}
\mathop{\max}_{\bm{X},\bm{k},\bm{p}}\quad\!\!
&\frac{\sum\limits_{i\in \mathbb{U}}\sum\limits_{j\in \mathbb{B}} x_{ij}\log c_{ij}}{\sum\limits_{j\in \mathbb{B}}p_{j}+P_c}\\
\textrm{s.t.}\quad\!\!
&0\leq p_{j}\leq P_{j}^{m},\; \forall j\in \mathbb{B}\\
&\small{\sum\limits_{j\in \mathbb{B}}} x_{ij} = 1, \; \forall i\in \mathbb{U}\\
&\small{\sum\limits_{i\in \mathbb{U}}} x_{ij} = k_{j}, \quad \small{\sum\limits_{j\in \mathbb{B}}} k_{j} = N_u\\
& x_{ij}\in \{0,1\}, \; \forall i\in \mathbb{U}, j\in \mathbb{B}
\end{align}
\end{subequations}
where $\bm{X}\!=\![x_{ij}]_{N_{u}\times (N_{s}+N_{m})}$ is the association matrix, $\bm{k}\!=\!\{k_{j}\}_{j\in \mathbb{B}}$ denotes the BS load, $\bm{p}\!=\!\{p_{j}\}_{j\in \mathbb{B}}$ is the transmit power vector, $P_{c}$ is a constant denoting the circuit power consumption, and $P_{j}^{m}$ is the maximum power constraint.
%Constraints (3c) ensure that each user is associated with only one BS, and (3e) guarantees that all users are served.

\section{UEE Oriented Resource Allocation}
Problem \eqref{eq:pro_X_P} is a typical nonlinear fractional programming and can be equivalently transformed via parametric programming \cite{Xu2015Robust}. Define the parametric subtractive problem as
\begin{equation}\label{eq:pro_eta}
 F(\eta)= \max\limits_{\bm{X},\bm{k},\bm{p}} \quad \sum\limits_{i\in \mathbb{U}}\sum\limits_{j\in \mathbb{B}} x_{ij}\log c_{ij}-\eta(\sum\limits_{j\in \mathbb{B}}p_{j}+P_c).
\end{equation}
The solution $\{\bm{X}^*,\bm{k}^*,\bm{p}^*\}$ to \eqref{eq:pro_X_P} is also optimal for \eqref{eq:pro_eta} for a certain $\eta^{*}\geq 0$ that satisfies $F(\eta^{*})= 0$. The optimal value of \eqref{eq:pro_X_P} is equal to $\eta^{*}$. For fixed $\eta$, problem \eqref{eq:pro_eta} has the form as
\begin{equation}\label{eq:pro_X_P_eta}
\mathop{\max}_{\bm{X},\bm{k},\bm{p}}\ \sum\limits_{i\in \mathbb{U}}\sum\limits_{j\in \mathbb{B}} x_{ij}\log c_{ij}-\eta \sum\limits_{j\in \mathbb{B}}p_{j},\; \textrm{s.t.}\;(4\text{b})-(4\text{e}).
\end{equation}
%\begin{subequations}\label{eq:pro_X_P_eta}
%\begin{align}
%\mathop{\max}_{\bm{X},\bm{k},\bm{p}}\quad\!\!
%&\sum\limits_{i\in \mathbb{U}}\sum\limits_{j\in \mathbb{B}} x_{ij}\log c_{ij}-\eta \sum\limits_{j\in \mathbb{B}}p_{j}\\
%\textrm{s.t.}\quad\!\!
%&(3\text{b})-(3\text{f}).
%\end{align}
%\end{subequations}
%\begin{align*}\label{eq:pro_X_P_eta}
%&\max\limits_{\bm{X},\bm{k},\bm{p}}\ \sum\limits_{i\in \mathbb{U}}\sum\limits_{j\in \mathbb{B}} x_{ij}\log c_{ij}-\eta\left(\sum\limits_{j\in \mathbb{B}}p_{j}+P_c\right)\\
%\tag{5}
%& \begin{array}{r@{\quad}r@{}l@{\quad}l}
%\rm {s.t.}&& 0\leq p_{j}\leq P_{j}^{m},  \forall j\in \mathbb{B} \\
% &&\sum\limits_{j\in \mathbb{B}} x_{ij} = 1,  \forall i\in \mathbb{U} \\
% &&\sum\limits_{i\in \mathbb{U}} x_{ij} = k_{j},  \forall j\in \mathbb{B} \\
% &&\sum\limits_{j\in \mathbb{B}} k_{j} = N_u \\
% &&x_{ij}\in \{0,1\} ,\forall i\in \mathbb{U}, j\in \mathbb{B}. \\
%\end{array}
%\end{align*}

Problem \eqref{eq:pro_X_P_eta} is still a nonconvex mixed-integer problem and the optimum is difficult to find. Since the SINR only depends on transmit power, this problem will be relatively tractable if $\bm{p}$ is temporarily fixed. Thus, we first consider the association problem under fixed $\bm{p}$, and then deal with the power control with fixed $\bm{X}$. Joint optimization is conducted alternately.

\subsection{Optimal User Association with Fixed Power}
Given $\bm{p}$, and introducing $m_{ij}\triangleq \log\left(W\log\left(1+\text{SINR}_{ij}\right)\right)$, the user association problem can be equivalently rewritten as
\begin{equation}\label{eq:problem_X}
\mathop{\max}_{\bm{X},\bm{k}}\ \sum\limits_{i\in \mathbb{U}}\sum\limits_{j\in \mathbb{B}} x_{ij}m_{ij}-\sum\limits_{j\in \mathbb{B}}k_{j}\log k_{j},\; \textrm{s.t.}\;(4\text{c})-(4\text{e}).
\end{equation}
%\begin{subequations}\label{eq:problem_X}
%\begin{align}
%\mathop{\max}_{\bm{X},\bm{k}}\quad\!\!
%&\sum\limits_{i\in \mathbb{U}}\sum\limits_{j\in \mathbb{B}} x_{ij}m_{ij}-\sum\limits_{j\in \mathbb{B}}k_{j}\log k_{j}\\
%\textrm{s.t.}\quad\!\!
%&(3\text{c})-(3\text{f}).
%\end{align}
%\end{subequations}
%\begin{align*}\label{eq:problem_X}
%&\max\limits_{\bm{X},\bm{k}}\ \sum\limits_{i\in \mathbb{U}}\sum\limits_{j\in \mathbb{B}} x_{ij}m_{ij}-\sum\limits_{j\in \mathbb{B}}k_{j}\log k_{j}\\
%\tag{6}
%& \begin{array}{r@{\quad}r@{}l@{\quad}l}
%\rm {s.t.}&&\sum\limits_{j\in \mathbb{B}} x_{ij} = 1,  \forall i\in \mathbb{U} \\
% &&\sum\limits_{i\in \mathbb{U}} x_{ij} = k_{j},  \forall j\in \mathbb{B} \\
% &&\sum\limits_{j\in \mathbb{B}} k_{j} = N_u \\
% &&x_{ij}\in \{0,1\} ,\forall i\in \mathbb{U}, j\in \mathbb{B}. \\
%\end{array}
%\end{align*}

Problem \eqref{eq:problem_X} is generally complicated due to the binary-valued constraint (4e). We in the following first adopt the fractional user association relaxation, where $x_{ij}$ can take any real value in [0,1]. Later we will show that fortunately the relaxed problem generates optimal $x_{ij}^{*}$ as integers which is thus guaranteed as the optimum of original problem \eqref{eq:problem_X}.

The relaxed user association problem equals:
\begin{subequations}\label{eq:relaxation}
\begin{align}
\mathop{\max}_{\bm{X},\bm{k}}\quad\!\!
&\sum\limits_{i\in \mathbb{U}}\sum\limits_{j\in \mathbb{B}} x_{ij}m_{ij}-\sum\limits_{j\in \mathbb{B}}k_{j}\log k_{j}\\
\textrm{s.t.}\quad\!\!
&(4\text{c}),(4\text{d}),0\leq x_{ij}\leq 1.
\end{align}
\end{subequations}
\begin{mylemma}
Problem \eqref{eq:relaxation} is a convex problem.
\end{mylemma}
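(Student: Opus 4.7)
The plan is to verify the two standard ingredients of a convex program: a convex feasible set and a concave objective (since we are maximizing).

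First I would examine the feasible set. The constraints (4c), (4d), and the relaxed box constraint $0\le x_{ij}\le 1$ are all affine equalities or inequalities in the joint variable $(\bm X,\bm k)$ (note that once $\bm p$ is fixed, the coefficients $m_{ij}$ are constants, so $\bm k$ enters only through $k_j=\sum_i x_{ij}$). Any intersection of affine sets and halfspaces is a polyhedron, hence convex, so the feasible set of \eqref{eq:relaxation} is convex.

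Next I would show the objective is concave. The first term $\sum_{i}\sum_{j} x_{ij}m_{ij}$ is linear in $\bm X$, so both concave and convex. For the second term, the key fact is that the scalar function $g(t)=t\log t$ is convex on $t\ge 0$ because $g''(t)=1/t>0$ for $t>0$ (with the usual continuous extension $g(0)=0$); hence $-g$ is concave. Since $k_j=\sum_{i\in\mathbb U}x_{ij}$ is an affine function of $\bm X$ by (4d), the composition $-k_j\log k_j$ is concave in $\bm X$ (composition of a concave function with an affine map preserves concavity). Summing over $j\in\mathbb B$ preserves concavity, and adding the linear first term keeps the overall objective concave in $(\bm X,\bm k)$.

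Combining the two observations, \eqref{eq:relaxation} maximizes a concave objective over a convex set, which by definition is a convex optimization problem. I do not anticipate a real obstacle here; the only subtlety worth stating explicitly is that the nonlinearity $k_j\log k_j$ does not spoil concavity precisely because $t\log t$ is convex (so its negation is concave) and $k_j$ depends affinely on $\bm X$ through (4d), so no nontrivial composition rule is required.
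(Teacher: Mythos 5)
Your proposal is correct and matches the paper's argument in substance: the paper likewise notes that all constraints are linear and that the objective is concave, verifying the latter by observing that the Hessian (in the joint variables $(\bm X,\bm k)$) is diagonal with nonpositive entries, which is just the Hessian form of your observation that the $x_{ij}$ terms are linear and $-k_j\log k_j$ is concave since $(t\log t)''=1/t>0$. Your only cosmetic deviation is eliminating $\bm k$ through the affine relation $k_j=\sum_i x_{ij}$ and invoking the affine-composition rule, whereas the paper keeps $\bm k$ as a variable linked by the linear equality (4d); both are equally valid.
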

\begin{proof}
It is easy to prove the concavity of the objective function by checking its Hessian matrix which is diagonal with nonpositive elements, and all constraints are linear.
\end{proof}
To deal with \eqref{eq:problem_X}, a typical way is to solve convex problem \eqref{eq:relaxation} by interior-point method \cite{Boyd2004Convex} and then conduct rounding on $\bm{X}$. The optimality however may not be preserved in theory. Considering that the dimension of the variables in \eqref{eq:relaxation} is $(N_u\!+\!1)(N_m\!+\!N_s)$, the complexity of solving \eqref{eq:relaxation} by the standard interior-point method is $\mathcal{O}(N_u^3(N_m\!+\!N_s)^3)$ [8, p. 487, 569], which is generally high. Here, we adopt the Lagrangian dual decomposition analysis \cite{Palomar2006A} for achieving low-complexity and optimum guaranteed solutions. Introducing dual variables $\bm{\mu}\!=\!\{\mu_{j}\}_{j\in \mathbb{B}}$ and $\nu$, the Lagrangian function of \eqref{eq:relaxation} is
\vspace{-0.1cm}
\begin{align}\label{eq:lagfun}
%\tag{9}
  L(\bm{X},\bm{k},\bm{\mu},\nu)\!=\!&\sum\limits_{i\in \mathbb{U}}\sum\limits_{j\in \mathbb{B}} x_{ij}m_{ij}\!-\!\sum\limits_{j\in \mathbb{B}}k_{j}\log k_{j}\\
                 &\!-\!\sum\limits_{j\in \mathbb{B}}\mu_{j}(\sum\limits_{i\in \mathbb{U}}x_{ij}\!-\!k_{j})\!-\!\nu(\sum\limits_{j\in \mathbb{B}}k_{j}\!-\!N_u).\nonumber
\end{align}
%and the Lagrangian dual problem of \eqref{eq:relaxation} is thus
%\begin{equation}\label{eq:dualproblem}
%\min\limits_{\bm{\mu},\nu}\ g(\bm{\mu},\nu)
%\end{equation}
%where
%%\begin{subequations}\label{eq:dualfunc}
%%\begin{align}
%%g(\bm{\mu},\nu)=&\mathop{\max}_{\bm{X},\bm{k}}\quad\!\!
%%&L(\bm{X},\bm{k},\bm{\mu},\nu)\\
%%\textrm{s.t.}\quad\!\!
%%&(3\text{c}),(7\text{c}).
%%\end{align}
%%\end{subequations}
%\begin{equation}\label{eq:dualfunc}
%g(\bm{\mu},\nu)=\max\limits_{\bm{X},\bm{k}}\ L(\bm{X},\bm{k},\bm{\mu},\nu),\;\textrm{s.t.}\;(3\text{c}),(7\text{c}).
%\end{equation}

It is readily to find that the convex problem in \eqref{eq:relaxation} satisfies the Slater's condition, which means that strong duality holds \cite{Boyd2004Convex}. Therefore, primal problem \eqref{eq:relaxation} can be equivalently solved by solving the dual problem.

The optimal $k_{j}^{*}$ can be obtained by letting $\frac{\partial L(\cdot)}{\partial k_{j}}\!=\!0$. Then by rewriting the Lagrangian function and removing irrelevant items, the optimization problem on $\bm{X}$ is simplified as
\vspace{-0.1cm}
\begin{equation}
\max\limits_{\bm{X}}\ \sum\limits_{i\in \mathbb{U}}\sum\limits_{j\in \mathbb{B}}x_{ij}(m_{ij}-\mu_{j}),\;\textrm{s.t.}\;(4\text{c}),0\leq x_{ij}\leq 1.
\end{equation}
The objective function is upper bounded by
\vspace{-0.1cm}
\begin{equation}\label{eq:analysis}
\sum\limits_{i\in \mathbb{U}}\sum\limits_{j\in \mathbb{B}}x_{ij}(m_{ij}-\mu_{j})\leq \sum\limits_{i\in \mathbb{U}}\max\limits_{j\in \mathbb{B}}(m_{ij}-\mu_{j}).
\end{equation}
If there exists one feasible $x_{ij}$ making (11) achieve with equality, it is exactly the optimal solution to (10). Intuitively, we find a feasible solution as given by
\vspace{-0.1cm}
\begin{equation}\label{eq:opt_X}
x_{ij}^{*}\!=\!
\begin{cases}
1, &\text{if}\; j \!=\! j^{(i)}\\
0, &\text{if}\; j \!\neq\! j^{(i)},
\end{cases}
\text{where}\;j^{(i)}\!=\!\arg\;\max\limits_{q\in \mathbb{B}}(m_{iq}\!-\!\mu_{q}).
\end{equation}
We find that although the binary constraint is relaxed in \eqref{eq:relaxation}, the optimal $x_{ij}^{*}$ is fortunately either 0 or 1 which exactly satisfies constraints in \eqref{eq:problem_X}. Thus, the optimum obtained by solving dual problem is in fact optimal to original problem \eqref{eq:problem_X}.

The dual variables are iteratively updated as
\vspace{-0.1cm}
\begin{equation}\label{eq:opt_u}
\mu_{j}^{(t+1)}=\mu_{j}^{(t)}-\delta^{(t)}({\rm e}^{\mu_{j}^{(t)}-\nu^{(t)}-1}-\sum\limits_{i\in \mathbb{U}}x_{ij}^{(t)})
\end{equation}
\vspace{-0.1cm}
\begin{equation}\label{eq:opt_v}
\nu^{(t)}=(\log \sum\limits_{j\in \mathbb{B}}{\rm e}^{\mu_{j}^{(t)}-1})/N_u
\vspace{-0.2cm}
\end{equation}
where $\delta^{(t)}$ is the step size.

The Lagrangian dual method follows our engineering intuitions. Regard $\mu_j$ as the price of BS $j$ and $m_{ij}$ as the utility rate if user $i$ is associated with BS $j$. When choosing the BS, each user considers maximizing the utility rate minus the price according to \eqref{eq:opt_X}, while each BS updates its price to balance load via \eqref{eq:opt_u}, which indicates the law of supply and demand. If the service demand of users $\sum_{i\in \mathbb{U}}x_{ij}^{(t)}$ is larger than the total supply amount ${\rm e}^{\mu_{j}^{(t)}-\nu^{(t)}-1}$ of BS $j$, it increases the price.

\subsection{Power Control Method}\label{sec:power control}
Given the optimized association, we then focus on power optimization. With fixed $\bm{X}$ and introducing auxiliary variables $\bm{\lambda}=\{\lambda_{ij}\}_{i\in \mathbb{U},j\in \mathbb{B}}$, the power control problem equals to
\begin{subequations}\label{eq:pro_P_lam1}
\begin{align}
\mathop{\max}_{\bm{p},\bm{\lambda}}\quad\!\!
&\sum\limits_{i\in \mathbb{U}}\sum\limits_{j\in \mathbb{B}} x_{ij}\log\left(\log\left(1+\lambda_{ij}\right)\right)-\eta\sum\limits_{j\in \mathbb{B}}p_{j}\\
\textrm{s.t.}\quad\!\!
&(4b),\ \frac{h_{ij}p_{j}}{\sum_{q\neq j}h_{iq}p_{q}+\sigma^{2}}\geq \lambda_{ij},\; \forall i\in \mathbb{U}, j\in \mathbb{B}.
\end{align}
\end{subequations}
%\begin{subequations}\label{eq:pro_P_lam1}
%\begin{align}
%\mathop{\max}_{\bm{p},\bm{\lambda}}\quad\!\!
%&\sum\limits_{i\in \mathbb{U}}\sum\limits_{j\in \mathbb{B}} x_{ij}\log\left(\log\left(1+\lambda_{ij}\right)\right)-\eta\sum\limits_{j\in \mathbb{B}}p_{j}\\
%\textrm{s.t.}\quad\!\!
%&0\leq p_{j}\leq P_{j}^{m},\quad \forall j\in \mathbb{B}\\
%&\frac{h_{ij}p_{j}}{\sum_{q\neq j}h_{iq}p_{q}+\sigma^{2}}\geq \lambda_{ij},\quad \forall i\in \mathbb{U}, j\in \mathbb{B}.
%\end{align}
%\end{subequations}
%\begin{align*}\label{eq:pro_P_lam1}
%&\max\limits_{\bm{p},\bm{\lambda}}\ \sum\limits_{i\in \mathbb{U}}\sum\limits_{j\in \mathbb{B}} x_{ij}\log\left(\log\left(1+\lambda_{ij}\right)\right)-\eta\sum\limits_{j\in \mathbb{B}}p_{j}\\
%\tag{19}
%& \begin{array}{r@{\quad}r@{}l@{\quad}l}
%\rm{s.t.}&&\frac{h_{ij}p_{j}}{\sum_{q\neq j}h_{iq}p_{q}+\sigma^{2}}\geq \lambda_{ij},\forall i\in \mathbb{U}, j\in \mathbb{B}  \\
% && 0\leq p_{j}\leq P_{j}^{m},  \forall j\in \mathbb{B}
%\end{array}
%\end{align*}
%\setcounter{equation}{19}
Note that introducing $\bm{\lambda}$ helps to transform the objective function into concave. Since each user is associated with one BS, there exists only one $j$ such that $x_{ij}\!=\!1$ for user $i$. Let $\mathbb{U}_j\!=\!\{i\in \mathbb{U}|x_{ij}\!=\!1\}$ denote the set of users associated with BS $j$. For convenience, we use $\bm{\lambda^{'}}\!=\!\{\lambda_{i}\}_{i\in \mathbb{U}_j}$ to replace $\bm{\lambda}$. Then problem \eqref{eq:pro_P_lam1} can be further reformulated as
\begin{subequations}\label{eq:pro_P_lam2}
\begin{align}
\mathop{\max}_{\bm{p},\bm{\lambda^{'}}}\quad\!\!
&\sum\limits_{j\in \mathbb{B}}\sum\limits_{i\in \mathbb{U}_j} \log\left(\log\left(1+\lambda_{i}\right)\right)-\eta\sum\limits_{j\in \mathbb{B}}p_{j}\\
\textrm{s.t.}\quad\!\!
&(4b),\ \frac{h_{ij}p_{j}}{\sum_{q\neq j}h_{iq}p_{q}+\sigma^{2}}\geq \lambda_{i},\; \forall j\in \mathbb{B}, i\in \mathbb{U}_j.
\end{align}
\end{subequations}
%\begin{subequations}\label{eq:pro_P_lam2}
%\begin{align}
%\mathop{\max}_{\bm{p},\bm{\lambda^{'}}}\quad\!\!
%&\sum\limits_{j\in \mathbb{B}}\sum\limits_{i\in \mathbb{U}_j} \log\left(\log\left(1+\lambda_{i}\right)\right)-\eta\sum\limits_{j\in \mathbb{B}}p_{j}\\
%\textrm{s.t.}\quad\!\!
%&0\leq p_{j}\leq P_{j}^{m},\quad \forall j\in \mathbb{B}\\
%& \frac{h_{ij}p_{j}}{\sum_{q\neq j}h_{iq}p_{q}+\sigma^{2}}\geq \lambda_{i},\quad \forall j\in \mathbb{B}, i\in \mathbb{U}_j.
%\end{align}
%\end{subequations}
%\begin{align}\label{eq:pro_P_lam2}
%&\max\limits_{\bm{p},\bm{\lambda^{'}}}\ \sum\limits_{j\in \mathbb{B}}\sum\limits_{i\in \mathbb{U}_j} \log\left(\log\left(1+\lambda_{i}\right)\right)-\eta\sum\limits_{j\in \mathbb{B}}p_{j}\\
%& \begin{array}{r@{\quad}r@{}l@{\quad}l}
%\rm{s.t.}\label{eq:constraint1}&& \frac{h_{ij}p_{j}}{\sum_{q\neq j}h_{iq}p_{q}+\sigma^{2}}\geq \lambda_{i},\forall j\in \mathbb{B}, i\in \mathbb{U}_j  \\
% && 0\leq p_{j}\leq P_{j}^{m},  \forall j\in \mathbb{B}
%\end{array}
%\end{align}

Introducing auxiliary variables ${\rm e}^{\rho_{j}}\triangleq p_j$ and ${\rm e}^{\theta_{i}}\triangleq \lambda_{i}$, the nonconvex constraints in (16) thus become convex as
\begin{equation}
 {\rm e}^{\theta_{i}-\rho_{j}+\beta_{i}}+\sum_{q\neq j}{\rm e}^{\theta_{i}-\rho_{j}+\rho_{q}+\gamma_{iq}} \leq 1
 \vspace{-0.2cm}
\end{equation}
where $\beta_{i} \triangleq \log\left( \frac{\sigma^{2}}{h_{ij}}\right)$ and $\gamma_{iq} \triangleq \log\left(\frac{h_{iq}}{h_{ij}}\right)$ are constants. Further, by defining $\omega_i \triangleq \theta_{i}-\rho_{j}+\beta_{i}$ and $s_{ijq} \triangleq \theta_{i}-\rho_{j}+\rho_{q}+\gamma_{iq}$, problem \eqref{eq:pro_P_lam2} is reformulated as
\begin{subequations}\label{eq:pro_rho_theta}
\begin{align}
\mathop{\max}_{\bm{\rho},\bm{\theta},\bm{\omega},\bm{s}}\quad\!\!
&\sum\limits_{j\in \mathbb{B}} \sum\limits_{i\in \mathbb{U}_j} \log\left(\log\left(1+{\rm e}^{\theta_{i}}\right)\right)-\eta\sum\limits_{j\in \mathbb{B}}{\rm e}^{\rho_{j}}\\
\textrm{s.t.}\quad\!\!
&\rho_{j}\leq \log \left(P_{j}^{m}\right), \; \forall j\in \mathbb{B} \\
&{\rm e}^{\omega_i}\!+\!\sum_{q\neq j}{\rm e}^{s_{iq}} \leq 1, \; \forall j\in \mathbb{B}, i\in \mathbb{U}_j \\
&\omega_i\!=\!\theta_{i}\!-\!\rho_{j}\!+\!\beta_{i}, \; \forall j\in \mathbb{B}, i\in \mathbb{U}_j \\
&s_{ijq}\!=\!\theta_{i}\!-\!\rho_{j}\!+\!\rho_{q}\!+\!\gamma_{iq}, \, \forall j\in \mathbb{B}, i\in \mathbb{U}_j, q \neq j
\end{align}
\end{subequations}
where $\bm{\rho}=\{\rho_{j}\}_{j\in \mathbb{B}}$, $\bm{\theta}=\{\theta_{i}\}_{i\in \mathbb{U}_j}$, $\bm{\omega}=\{\omega_{i}\}_{i\in \mathbb{U}_j}$, and $\bm{s}=\{s_{ijq}\}_{j\in \mathbb{B}, i\in \mathbb{U}_j, q\neq j\in \mathbb{B}}$.

\IncMargin{1em} % 使得行号不向外突出
\begin{algorithm}
    Initialize the parameter $\eta=0$, and a small $\varsigma>0$\;
    \Repeat{$\varsigma^{*}\leq \varsigma$}
            {Initialize any feasible $\bm{p}$\;
                \Repeat{\text{convergence}}
                {Solve association problem (7) with fixed $\bm{p}$\;
                 Solve power control problem (15) with fixed $\bm{X}$\;
                }
                Calculate $\varsigma^{*}\!=\!\sum\limits_{i\in \mathbb{U}}\sum\limits_{j\in \mathbb{B}} x_{ij}\log c_{ij}\!-\!\eta(\sum\limits_{j\in \mathbb{B}}p_{j}\!+\!P_c)$\;
                Update $\eta\!=\!(\sum\limits_{i\in \mathbb{U}}\sum\limits_{j\in \mathbb{B}} x_{ij}\log c_{ij})/(\sum\limits_{j\in \mathbb{B}}p_{j}\!+\!P_c)$\;
            }
   Output optimal $\bm{X}^*$ and $\bm{p}^*$\;
    \caption{IUAPC Algorithm\label{al1}}
\end{algorithm}
\DecMargin{1em}

Problem \eqref{eq:pro_rho_theta} is convex because the objective function is concave and the constraints are either linear or convex. Thus, the globally optimal solution can be obtained by, e.g., the interior-point method. We stress that efficient solutions to this specific problem can be acquired by exploiting the dual method \cite{Yang2017Fair}. Details are given in the Appendix.

Thus far, we are ready to present the iterative user association and power control (IUAPC) algorithm in \textbf{Algorithm 1}. Since the objective value increases with global optimum found in each iteration and it has a finite upper bound, the overall iterative algorithm is guaranteed to converge \cite{Xu2015Robust}\cite{Wang2012Joint}. Besides, the solution can be a local optimum following the proof in \cite{Xu2015Robust}. For our proposed algorithm, the total complexity amounts to $\mathcal{O}(N_u(N_m\!+\!N_s)(\log_2(1/\epsilon_0)\!+\!N_m\!+\!N_s))$, where $\epsilon_0$ is the accuracy of bisection search. Our proposed algorithm is centralized, which can be more suitable for applications, e.g., in the cloud radio access network (CRAN). The main signaling overhead for implementation is the exchange of channel information which is proportional to $N_u(N_m\!+\!N_s)$.

\begin{table}[t!]
%\footnotesize
\scriptsize
%\tiny
%\setlength{\abovecaptionskip}{0pt}
\setlength{\belowcaptionskip}{-10pt}
  \centering
  \caption{Simulation Parameters}\label{TABLE:parameter}
    \begin{tabular}{|c|c|c|c|}
  \hline
  % after \\: \hline or \cline{col1-col2} \cline{col3-col4} ...
  Channel bandwidth & 10 MHz & Cell radius & 500 m \\
  \hline
  Max power of MBS & -27 dBm/Hz & Number of MBS & 1 \\
  \hline
  Max power of SBS & -47 dBm/Hz & Number of SBS & 3 \\
  \hline
  Circuit constant power & 1 W  & Number of users & 30 \\
  \hline
%  Noise power & -169 dBm/Hz & Antenna Gain & 15 dBi \\
%  \hline
\end{tabular}
\end{table}

\begin{figure}[t!]
\setlength{\abovecaptionskip}{-5pt}
\centering
\includegraphics[width = 3.0in]{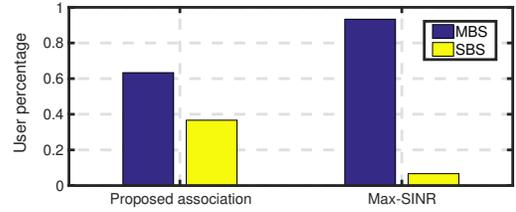}
\caption{The percentage of MBS/SBS users for different association methods.}
\label{Fig:user percentage}
\vspace{-0.3cm}
\end{figure}
\begin{table}[t!]
%\footnotesize
\scriptsize
%\tiny
\newcommand{\tabincell}[2]{\begin{tabular}{@{}#1@{}}#2\end{tabular}}
\setlength{\belowcaptionskip}{-10pt}
  \centering
  \caption{Average UEE Comparison of Different Algorithms}\label{TABLE:UEE}
  \begin{tabular}{|c|c|c|c|c|}\hline
    & \tabincell{c}{Proposed \\ algorithm} & \tabincell{c}{Max-SINR \\ with power control} & \tabincell{c}{Max-SINR \\ with max power} \\ \hline
UEE & 35.392 & 29.982 & 1.490 \\ \hline
\end{tabular}
\end{table}
%\begin{table}[t!]
%%\footnotesize
%\scriptsize
%%\tiny
%\newcommand{\tabincell}[2]{\begin{tabular}{@{}#1@{}}#2\end{tabular}}
%\setlength{\belowcaptionskip}{-10pt}
%  \centering
%  \caption{Average UEE Comparison of Different Algorithms}\label{TABLE:UEE}
%  \begin{tabular}{|c|c|c|c|c|}\hline
%    & \tabincell{c}{Proposed \\ algorithm} & \tabincell{c}{Max-SINR \\ with power control} & \tabincell{c}{Max-SINR \\ with max power} & \tabincell{c}{Exhausive \\ algorithm}\\ \hline
%UEE & 34.140 & 29.982 & 1.490 & 35.392\\ \hline
%\end{tabular}
%\end{table}

\section{Numerical Results}\label{sec:simulation}
In this section, we evaluate the performance of the proposed algorithm via simulation. Consider a downlink 2-tier HetNet with simulation parameters listed in Table \ref{TABLE:parameter}. The pathloss is modelled as $128.1+37.6\log_{10}d$(km), and the shadow fading is log-normally distributed as $\mathcal{N}(0,\sigma^2)$ where $\sigma=8$dB.

Fig. 1 compares the percentage of MBS/SBS users for different association methods. Note that the “SINR” in the Max-SINR association is evaluated with only large-scale channel fading, which is the same as the Max-SNR association in \cite{Chien2016Joint}. For the Max-SINR association, the MBS is overloaded with more than $90\%$ users, while the proposed association achieves balanced load. We also evaluate the performance of our iterative algorithm by comparing with existing methods. In our test, the number of iterations required for the convergence of outer loop is 5, and the optimal $\eta$ is obtained as 35.392. For a fixed $\eta$, the inner loop converges with 2 iterations. Table II compares the average UEE, and Fig. \ref{Fig:CDF2} plots the cumulative distribution function (CDF) of data rates. Our proposed algorithm is shown to perform well in terms of user fairness and throughput. Note that the Max-SINR with power control, which implements our proposed power control with traditional Max-SINR association, outperforms the Max-SINR with max power, proving the significance of power control to improve EE. Moreover, it has been observed that the proposed algorithm achieves near-global optimum via numerical exhaustive search under some small-scale test cases.

\section{Conclusion}\label{sec:conclusion}
In this letter, we jointly considered the user association and power control for HetNets. We formulated a UEE maximization problem and proposed an efficient algorithm. Results demonstrated the validity of our proposed algorithm. Note that the proposed algorithm can be readily extended to a massive MIMO scenario with subtle changes by scaling the SINR. Furthermore, to achieve better performance, the joint optimization with bandwidth allocation is of our future research interest.
%To achieve better performance, the joint optimization with bandwidth allocation is of our future research interest. {\color{blue}The extension to massive MIMO is also a possible research area.}

\appendix
The dual method is implemented by iteratively solving the primal variables with fixed dual variables, and updating the dual variables with given optimization variables. First, we obtain the Lagrangian function of convex problem \eqref{eq:pro_rho_theta} as
\begin{align*}
\tag{19}
  &L(\bm{\rho},\bm{\theta},\bm{\omega},\bm{s},\bm{a},\bm{b},\bm{\zeta},\bm{\chi})\nonumber\\
                =&\sum\limits_{j\in \mathbb{B}}\sum\limits_{i\in \mathbb{U}_j} \log\left(\log\left(1+{\rm e}^{\theta_{i}}\right)\right)-\eta\sum\limits_{j\in \mathbb{B}}{\rm e}^{\rho_{j}}\nonumber\\
                 &-\sum\limits_{j\in \mathbb{B}}\sum\limits_{i\in \mathbb{U}_j}a_{i}({\rm e}^{\omega_i}+\sum_{q\neq j}{\rm e}^{s_{ijq}}-1)\nonumber\\
                 &-\sum\limits_{j\in \mathbb{B}}b_{j}\left(\rho_{j}-\log P_{j}^{m}\right)-\sum\limits_{j\in \mathbb{B}}\sum\limits_{i\in \mathbb{U}_j}\zeta_{i}(\omega_i-\theta_{i}+\rho_{j}-\beta_{i})\nonumber\\
                 &-\sum\limits_{j\in \mathbb{B}}\sum\limits_{i\in \mathbb{U}_j}\sum\limits_{q\in \mathbb{B}, q\neq j}\chi_{ijq}(s_{ijq}-\theta_{i}+\rho_{j}-\rho_{q}-\gamma_{iq})\nonumber
\end{align*}
where $\bm{a}=\{a_{i}\}_{i\in \mathbb{U}_j}$, $\bm{b}=\{b_{j}\}_{j\in \mathbb{B}}$, $\bm{\zeta}=\{\zeta_{i}\}_{i\in \mathbb{U}_j}$, and $\bm{\chi}=\{\chi_{ijq}\}_{j\in \mathbb{B}, i\in \mathbb{U}_j, q\neq j\in \mathbb{B}}$ are dual variables associated with the corresponding constraints of \eqref{eq:pro_rho_theta}. Then the optimal solution to convex problem \eqref{eq:pro_rho_theta} should satisfy
\setcounter{equation}{19}
\begin{align}
%\label{eq:partial_rho}  &\frac{\partial L}{\partial \rho_{j}}\!=\!-\eta{\rm e}^{\rho_{j}}\!-\!b_{j}\!-\!\sum\limits_{i\in \mathbb{U}_j}\zeta_{i}\!-\!\sum\limits_{q\in \mathbb{B}, q\neq j}\left(\sum\limits_{i\in \mathbb{U}_j}\chi_{ijq}\!-\!\sum\limits_{i\in \mathbb{U}_q}\chi_{iqj}\right)\!=\!0 \\
\label{eq:partial_rho}  &\frac{\partial L}{\partial \rho_{j}}\!=\!-\eta {\rm e}^{\rho_{j}}-b_{j}-\sum\limits_{i\in \mathbb{U}_j}\zeta_{i}-\sum\limits_{i\in \mathbb{U}_j}\sum\limits_{q\in \mathbb{B}, q\neq j}\chi_{ijq}\\ \nonumber
&\quad\quad\quad+\sum\limits_{q\in \mathbb{B}, q\neq j}\sum\limits_{i\in \mathbb{U}_q}\chi_{iqj}=0, \\
\label{eq:partial_theta} &\frac{\partial L}{\partial \theta_{i}}\!=\!\frac{{\rm e}^{\theta_{i}}}{\left(1+{\rm e}^{\theta_{i}}\right)\log\left(1+{\rm e}^{\theta_{i}}\right)}+\zeta_{i}+\sum\limits_{q\in \mathbb{B}, q\neq j}\chi_{ijq} =0, \\
\label{eq:partial_omega} &\frac{\partial L}{\partial \omega_{i}}\!=\!-a_{i}{\rm e}^{\omega_i}\!-\!\zeta_{i}\!=\!0,\ \frac{\partial L}{\partial s_{ijq}}\!=\!-a_{i}{\rm e}^{s_{ijq}}\!-\!\chi_{ijq}\!=\!0.
\end{align}

In each iteration, we can calculate the primary variables with given Lagrangian multipliers. Define function $f(x)=\frac{{\rm e}^x}{(1+{\rm e}^x)\log(1+{\rm e}^x)}$, which is strictly decreasing by checking its first-order derivative. Denote $f^{-1}(x)$ as the inverse function of $f(x)$. From \eqref{eq:partial_rho}--\eqref{eq:partial_omega}, we obtain
\begin{align}
& \rho_{j}(t\!+\!1)\!=\!-\log \eta + \nonumber \\
&\quad \quad \quad \; \log(\!-\!b_{j}(t)\!\!-\!\!\sum\limits_{i\in \mathbb{U}_j}\zeta_{i}(t)\!\!-\!\!\sum\limits_{q\neq j}(\sum\limits_{i\in \mathbb{U}_j}\chi_{ijq}(t)\!\!-\!\!\sum\limits_{i\in \mathbb{U}_q}\chi_{iqj}(t))), \nonumber \\
%\sum\limits_{i\in \mathbb{U}_j}\sum\limits_{q\in \mathbb{B}, q\neq j}\chi_{ijq}(t)+\sum\limits_{q\in \mathbb{B}, q\neq j}\sum\limits_{i\in \mathbb{U}_q}\chi_{iqj}(t)}{\eta}\right) \\
& \theta_{i}(t\!+\!1)\!=\!f^{-1}(-\zeta_{i}(t)\!-\!\sum\limits_{q\in \mathbb{B}, q\neq j}\chi_{ijq}(t)),  \nonumber \\
\label{eq:omega} & \omega_{i}(t\!+\!1)\!=\!\log(-\zeta_{i}(t)/a_{i}(t)),\, s_{ijq}(t\!+\!1)\!=\!\log(-\chi_{ijq}(t)/a_{i}(t)). \nonumber
\end{align}
%Note that \eqref{eq:omega} and \eqref{eq:s} exist only when $\zeta_{i}< 0$ and $\chi_{iq}< 0$ are satisfied.
Alternately, the dual variables are updated as follows:
%\begin{align}
%& a_{i}(t+1)=\left[a_{i}(t)+\delta(t)\left({\rm e}^{\omega_i(t)}+\sum_{q\neq j}{\rm e}^{s_{ijq}(t)}-1\right)\right]^+\\
%& b_{j}(t+1)=\left[b_{j}(t)+\delta(t)\left(\rho_{j}(t)-\log(P_{j}^{m})\right)\right]^+\\
%&\zeta_{i}(t+1)=\zeta_{i}(t)+\delta(t)\left(\omega_i(t)-\theta_{i}(t)+\rho_{j}(t)-\beta_{i}\right)\\
%&\chi_{ijq}(t+1)=\chi_{ijq}(t)\\ \nonumber
%&\quad\quad\quad\quad\quad+\delta(t)\left(s_{ijq}(t)-\theta_{i}(t)+\rho_{j}(t)-\rho_{q}(t)-\gamma_{iq}\right).
%\end{align}
%\begin{align}
% & a_{i}(t\!+\!1)\!=\!\left[a_{i}(t)\!+\!\delta^{(t)}\left({\rm e}^{\omega_i(t)}\!+\!\sum_{q\neq j}{\rm e}^{s_{ijq}(t)}\!-\!1\right)\right]^+\nonumber \\
% & b_{j}(t\!+\!1)\!=\!\left[b_{j}(t)\!+\!\delta^{(t)}\left(\rho_{j}(t)\!-\!\log P_{j}^{m}\right)\right]^+ \nonumber\\
% & \zeta_{i}(t\!+\!1)\!=\!\zeta_{i}(t)\!+\!\delta^{(t)}\left(\omega_i(t)\!-\!\theta_{i}(t)\!+\!\rho_{j}(t)\!-\!\beta_{i}\right)\\
% & \chi_{ijq}(t\!+\!1)\!=\! \chi_{ijq}(t)+\\
% \quad \quad \quad \quad \quad \quad \delta^{(t)}\left(s_{ijq}(t)\!-\!\theta_{i}(t)\!+\!\rho_{j}(t)\!-\!\rho_{q}(t)\!-\!\gamma_{iq}\right)\nonumber
%\end{align}
\begin{equation}
\left\{
         \begin{array}{lr}
         a_{i}(t\!+\!1)\!=\!\left[a_{i}(t)\!+\!\delta^{(t)}\left({\rm e}^{\omega_i(t)}\!+\!\sum_{q\neq j}{\rm e}^{s_{ijq}(t)}\!-\!1\right)\right]^+\nonumber \\
         b_{j}(t\!+\!1)\!=\!\left[b_{j}(t)\!+\!\delta^{(t)}\left(\rho_{j}(t)\!-\!\log P_{j}^{m}\right)\right]^+ \nonumber\\
         \zeta_{i}(t\!+\!1)\!=\!\zeta_{i}(t)\!+\!\delta^{(t)}\left(\omega_i(t)\!-\!\theta_{i}(t)\!+\!\rho_{j}(t)\!-\!\beta_{i}\right)\\
         \chi_{ijq}(t\!\!+\!\!1)\!\!=\!\! \chi_{ijq}(t)\!+\!\delta^{(t)}\left(s_{ijq}(t)\!\!-\!\!\theta_{i}(t)\!\!+\!\!\rho_{j}(t)\!\!-\!\!\rho_{q}(t)\!\!-\!\!\gamma_{iq}\right)\nonumber
         \end{array}
    \right.
\end{equation}
where $\delta^{(t)}$ is the step size and $[x]^+$ returns $\max\{x$, $0\}$.
\begin{figure}[t!]
\setlength{\abovecaptionskip}{-5pt}
\centering
\includegraphics[width = 3.2in]{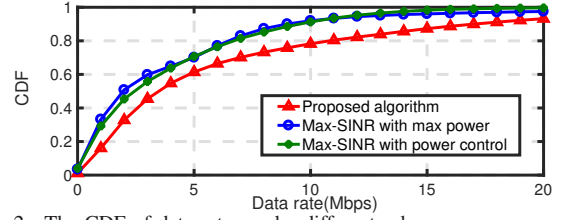}
\caption{The CDF of data rates under different schemes.}
\label{Fig:CDF2}
\vspace{-0.5cm}
\end{figure}


\begin{thebibliography}{1}
\footnotesize
%\bibliographystyle{IEEEtran}
%\bibliography{Hybrid}
\bibitem{Damnianovic2011A}
A. Damnianovic, J. Montojo, Y. Wei, T. Ji, T. Luo, M. Vajapeyam, T. Yoo, O. Song, and D. Malladi, ``A survey on 3GPP heterogeneous networks,'' \emph{IEEE Wireless Commun.}, vol. 18, no. 3, pp. 10--21, Jun. 2011.

\bibitem{Ye2012User}
Q. Ye, B. Rong, Y. Chen, M. Al-Shalash, C. Caramanis, and J. G. Andrews, ``User association for load balancing in heterogeneous cellular networks,'' \emph{IEEE Trans. Wireless Commun.}, vol. 12, no. 6, pp. 2706--2716, Jun. 2013.

\bibitem{Shen2014Distributed}
K. Shen and W. Yu, ``Distributed pricing-based user association for downlink heterogeneous cellular networks,'' \emph{IEEE J. Sel. Areas Commun.}, vol. 32, no. 6, pp. 1100--1113, Jun. 2014.

%\bibitem{Rashid1997Downlink}
%F. Rashid-Farrokhi, K. J. R. Liu, and L. Tassiulas, ``Downlink power control and base station assignment,'' \emph{IEEE Commun. Lett.}, vol. 1, no. 4, pp. 102--104, Jul. 1997.

\bibitem{Hanly1995An}
S. V. Hanly, ``An algorithm for combined cell-site selection and power control to maximize cellular spread spectrum capacity,'' \emph{IEEE J. Sel. Areas Commun.}, vol. 13, no. 7, pp. 2796--2815, Sep. 1995.

\bibitem{Cui2016Optimal}
Y. Cui, W. Xu, H. Shen, H. Zhang, and X. You, ``Optimal energy efficient association for small cell networks with QoS requirements,'' in \emph{Proc. IEEE WCNC}, Doha, Apr. 2016, pp. 1--6.

%\bibitem{Bertsekas1999Nonlinear}
%D. P. Bertsekas, \emph{Nonlinear Programming.} 2nd ed. Belmont, MA: Athena Scientific, 1999.

%\bibitem{Patrascu2015Efficient}
%A. Patrascu and I. Necoara, ``Efficient random coordinate descent algorithms for large-scale structured nonconvex optimization,'' \emph{J. Global Optim.}, vol. 61, no. 1, pp. 19--46, Feb. 2015.

\bibitem{Chien2016Joint}
T. V. Chien, E. Bj\"{o}rnson, and E. G. Larsson, ``Joint power allocation and user association optimization for massive MIMO systems,'' \emph{IEEE Trans. Wireless Commun.}, vol. 15, no. 9, pp. 6384--6399, Sep. 2016.

\bibitem{Xu2015Robust}
W. Xu, Y. Cui, H. Zhang, G. Y. Li, and X. You, ``Robust beamforming with partial channel state information for energy efficient networks,'' \emph{IEEE J. Sel. Areas Commun.}, vol. 33, no. 12, pp. 2920--2935, Dec. 2015.

\bibitem{Boyd2004Convex}
S. Boyd and L. Vandenberghe, \emph{Convex Optimization.} Cambridge University Press, 2004.

\bibitem{Palomar2006A}
D. P. Palomar and M. Chiang, ``A tutorial on decomposition methods for network utility maximization,'' \emph{IEEE J. Sel. Areas Commun.}, vol. 24, no. 8, pp. 1439--1451, Aug. 2006.

\bibitem{Yang2017Fair}
Z. Yang, W. Xu, and Y. Li, ``Fair non-orthogonal multiple access for visible light communication downlinks,'' \emph{IEEE Wireless Commun. Lett.}, vol. 6, no. 1, pp. 66--69, Feb. 2017.

\bibitem{Wang2012Joint}
R. Wang and M. Tao, ``Joint source and relay precoding designs for MIMO two-way relaying based on MSE criterion,'' \emph{IEEE Trans. Signal Process.}, vol. 60, no. 3, pp. 1352--1365, Mar. 2012.


\end{thebibliography}
\end{document}